\newcommand{\be}{\begin{eqnarray}}
\newcommand{\ee}{\end{eqnarray}}
\newcommand{\bez}{\begin{eqnarray*}}
\newcommand{\eez}{\end{eqnarray*}}
\newcommand{\cM}{\mathcal{M}}
\newcommand{\pa}{\partial}
\newcommand{\ins}{\leavevmode 
    \vbox{\kern.2em \hrule width1.2ex height0.1ex} 
    \hbox{\vrule width0.1ex height1.2ex depth0.ex \kern.1em} \,}
\theoremstyle{definition}
\newtheorem{theorem}{Theorem}[section]
\newtheorem{proposition}[theorem]{Proposition}
\newtheorem{remark}[theorem]{Remark}
\newtheorem{definition}[theorem]{Definition}
\newtheorem{lemma}[theorem]{Lemma}
\numberwithin{theorem}{section}
\numberwithin{equation}{section}
\begin{document}

\title{\textbf{On the Lorenzoni-Magri hierarchy \\ of hydrodynamic type }}

\author{\sc{Folkert M\"uller-Hoissen} \\
         \small  Institute for Theoretical Physics,  University of G\"ottingen \\
        \small  Friedrich-Hund-Platz 1, 37077 G\"ottingen, Germany \\
         \small  folkert.mueller-hoissen@phys.uni-goettingen.de  } 

\date{ }

\maketitle

\begin{abstract}
In 2005 Lorenzoni and Magri showed that a hydrodynamic-type hierarchy determined by the powers of a type $(1,1)$ 
tensor field (on a smooth manifold) with vanishing Nijenhuis torsion can be deformed to a more general hierarchy, 
with the help of a chain of conservation laws of the new hierarchy. We review this construction. The $(1,1)$ tensor fields 
of the resulting hierarchy have non-vanishing Nijenhuis torsion, in general, but their Haantjes tensor vanishes. 
\end{abstract}

\section{Introduction}
Systems of hydrodynamic type in two dimensions are first order homogeneous quasilinear partial differential equations (PDEs)
of the form
\bez
           \frac{ \pa x^\mu}{ \pa t} + N^\mu{}_\nu \, \frac{ \pa x^\nu}{ \pa y} = 0 \qquad  \mu=1,\ldots,n \, .
\eez
They play a role in various areas of mathematics and physics. Cases, in which they are 
integrable in some sense, are of particular interest.
The  seminal work of Tsarev \cite{Tsar91} (also see \cite{Pavl07} and references cited there) 
addressed the integrability of such systems essentially only in the case where the matrix $(N^\mu{}_\nu)$ of 
functions is diagonalizable.  Meanwhile there is quite a number of examples of non-diagonalizable hydrodynamic-type 
systems with integrability properties (see \cite{Fera93PD,Fera94TMP,Mokh+Fera96,Koda+Kono16,Pavl19,Xu+Fera20,Verg+Fera24}, for example).  

 One of the characteristic features of integrable PDEs is that they belong 
to an infinite hierarchy of compatible equations, such that their flows commute, allowing for common solutions. 

 The above system is invariant under general coordinate transformations if $N^\mu{}_\nu $ are the components of a 
tensor field of type $(1,1)$ on an $n$-dimensional manifold $\cM$ with local coordinates $x^\mu$. Such tensor fields 
are the central objects of Fr\"olicher-Nijenhuis theory  \cite{Froe+Nije56}. If the Nijenhuis torsion $T(N)$ vanishes, 
then 
\be
           \frac{ \pa x^\mu}{ \pa t} + (N^k)^\mu{}_\nu \, \frac{ \pa x^\nu}{ \pa y}  = 0 \qquad  \mu=1,\ldots,n \, , \quad k=1,2,\ldots
                   \label{N-hierarchy}
\ee
is known to be a hierarchy. 

The subject of the present work is a deformation of this hierarchy, presented by Lorenzoni and Magri in 
2005 \cite{Lore+Magri05} (also see \cite{Lore06,Lore+Perl23}). Here $N^k$ is replaced by the type $(1,1)$ tensor field
\bez
         M_k = N^k - \sum_{i=0}^{k-1} a_{k-1-i} N^i \, .
\eez 
This involves a chain of scalars $a_k$, $k=0,1,2,\ldots$, related by conservation laws of the new hierarchy.  
The Nijenhuis torsion of $M_k$ does not vanish, in general. Let us recall an important relevant result.

\begin{theorem}[Haantjes 1955 \cite{Haan55,Froe+Nije56}]
Let $M$ be a smooth type $(1,1)$ tensor field on a smooth manifold and such that there is a frame field 
(section of the bundle of linear frames) 
with respect to which the components of $M$ are diagonal. Then the existence of (local) coordinates, 
in which the components of $M$ are diagonal,  is equivalent to the vanishing of its Haantjes tensor $H(M)$. 
 \hfill $\Box$
\end{theorem}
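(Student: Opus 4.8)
The plan is to reduce everything to a single computation of the Haantjes tensor in the given eigenframe. Working locally, let $e_1,\dots,e_n$ be the frame field in which $M$ is diagonal, so that $M e_a = \lambda_a e_a$ with smooth eigenvalues $\lambda_a$, and write $[e_a,e_b] = \sum_c c^c_{ab}\, e_c$ for the (antisymmetric) structure functions. Recalling that the Nijenhuis torsion is
\[ T(M)(X,Y) = M^2[X,Y] + [MX,MY] - M[MX,Y] - M[X,MY], \]
and that the Haantjes tensor is obtained by applying the analogous construction once more,
\[ H(M)(X,Y) = M^2\, T(M)(X,Y) + T(M)(MX,MY) - M\,T(M)(MX,Y) - M\,T(M)(X,MY), \]
I would first evaluate $T(M)$ on the eigenframe. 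A direct expansion, using $M e_a = \lambda_a e_a$ and the Leibniz rule for brackets, yields
\[ T(M)(e_a,e_b) = \sum_c c^c_{ab}\,(\lambda_c-\lambda_a)(\lambda_c-\lambda_b)\,e_c + (\lambda_a-\lambda_b)\big[(e_a\lambda_b)\,e_b + (e_b\lambda_a)\,e_a\big]. \]

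The decisive observation is that, because $T(M)$ is tensorial, evaluating $H(M)$ on eigenvectors collapses the four terms into a single operator:
\[ H(M)(e_a,e_b) = (M-\lambda_a\,\mathrm{Id})(M-\lambda_b\,\mathrm{Id})\,T(M)(e_a,e_b). \]
Applying $(M-\lambda_a\,\mathrm{Id})(M-\lambda_b\,\mathrm{Id})$ to $e_c$ multiplies it by $(\lambda_c-\lambda_a)(\lambda_c-\lambda_b)$; in particular it annihilates $e_a$ and $e_b$, so the two eigenvalue-derivative terms drop out and I obtain the clean identity
\[ H(M)(e_a,e_b) = \sum_c c^c_{ab}\,(\lambda_c-\lambda_a)^2(\lambda_c-\lambda_b)^2\,e_c. \]
Hence $H(M)=0$ is equivalent to the purely algebraic condition that $c^c_{ab}=0$ whenever $\lambda_c\neq\lambda_a$ and $\lambda_c\neq\lambda_b$. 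This already settles the easy direction: in coordinates in which $M$ is diagonal, the coordinate frame is an eigenframe with all $c^c_{ab}=0$, so $H(M)=0$.

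For the converse I would translate the vanishing of these structure functions into an integrability statement and invoke Frobenius. Grouping the eigenvectors by their eigenvalues gives the eigendistributions $E_1,\dots,E_r$ belonging to the distinct eigenvalues $\mu_1,\dots,\mu_r$, and the condition above says precisely that $[E_i,E_i]\subseteq E_i$ and $[E_i,E_j]\subseteq E_i\oplus E_j$. Consequently each complementary distribution $\mathcal D_i := \bigoplus_{j\neq i} E_j$ is involutive, hence integrable, and I can choose functions whose differentials span its annihilator. Assembling these functions over all $i$ produces $n$ functions with independent differentials, i.e.\ a local coordinate system, in which each coordinate vector field lies in some $E_i$ and is therefore an eigenvector of $M$; thus $M$ is diagonal in these coordinates.

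The computational heart — and the one place where care is genuinely required — is the reduction $H(M)(e_a,e_b)=(M-\lambda_a\,\mathrm{Id})(M-\lambda_b\,\mathrm{Id})\,T(M)(e_a,e_b)$ together with the bookkeeping that makes the eigenvalue-derivative terms cancel; everything else is either the Leibniz expansion of the brackets or a standard Frobenius argument. The main substantive obstacle is the multiplicity case: one must check that grouping eigenvectors by eigenvalue yields smooth constant-rank distributions (which is why one works in a neighbourhood on which the eigenvalue multiplicities are constant) and that it is the complementary distributions $\mathcal D_i$, rather than the eigendistributions $E_i$ themselves, which must be fed into Frobenius so that the resulting functions really do furnish simultaneously diagonalizing coordinates.
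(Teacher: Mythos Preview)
The paper does not actually prove this theorem: it is quoted as a classical result of Haantjes (1955) and Fr\"olicher--Nijenhuis (1956) and closed with a $\Box$, so there is no argument in the text to compare against.

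Your proposal is essentially the classical proof. The reduction
\[
H(M)(e_a,e_b)=(M-\lambda_a I)(M-\lambda_b I)\,T(M)(e_a,e_b)
\]
is correct because $T(M)$ is $C^\infty$-bilinear, and your formula for $T(M)(e_a,e_b)$ in the eigenframe is right, so the eigenvalue-derivative terms are killed and one is left with
\[
H(M)(e_a,e_b)=\sum_c c^c_{ab}\,(\lambda_c-\lambda_a)^2(\lambda_c-\lambda_b)^2\,e_c .
\]
From here the equivalence with $c^c_{ab}=0$ whenever $\lambda_c\notin\{\lambda_a,\lambda_b\}$ is exactly the integrability condition $[E_i,E_j]\subseteq E_i\oplus E_j$, and feeding the complementary distributions $\mathcal D_i=\bigoplus_{j\neq i}E_j$ into Frobenius (rather than the $E_i$ themselves) is the right move: the resulting functions have independent differentials because $\mathrm{Ann}(\mathcal D_i)$ restricts isomorphically to $E_i^\ast$, and the coordinate vector fields in the $i$-th block lie in $\bigcap_{j\neq i}\mathcal D_j=E_i$.

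The one genuine restriction you flag --- working on a neighbourhood where the eigenvalue multiplicities are constant so that the $E_i$ are smooth constant-rank distributions --- is also the standard caveat in the literature; Haantjes' theorem is a local statement and is typically proved first on this open dense set.
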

 
The Haantjes tensor of a type $(1,1)$ tensor field $M$ vanishes if the Nijenhuis torsion $T(M)$ vanishes, but 
the vanishing of $H(M)$ is a weaker condition. Since we assumed $T(N)=0$, also $H(N) =0$. 
Under the assumptions of the theorem, which require that the eigenvalues of $N$ at each point are real, there 
are thus local coordinates in which the components of $N$ are diagonal. 
Clearly, $M_k$ is then also diagonal in these coordinates. As a consequence of Haantjes' theorem, 
also $H(M_k)=0$, $k=1,2,\ldots$. 

Haantjes' theorem says that a strictly hyperbolic hydrodynamic-type system, i.e., a system 
 with mutually distinct ``characteristic speeds"  \cite{Fera+Mars07}, is 
diagonalizable if and only if the corresponding Haantjes tensor vanishes. 
There is a generalization of Haantjes' theorem addressing block-diagonalization of a hydrodynamic-type system
\cite{Bogo06a}, where a weaker condition than vanishing of the Haantjes tensor is at work and it is assumed that 
the type $(1,1)$ tensor field of the hydrodynamic-type system maps each eigenspace into itself.  

Using available (block-) diagonalization theorems, apparently we cannot safely conclude that $H(M_k)=0$, without restrictions 
on $M_k$,  and it may therefore be worth to present a corresponding direct proof. In fact, it is also interesting to figure out, 
what the form of $T(M_k)$ is and in which way $H(M_k)=0$ is actually achieved. 

Section~\ref{sec:FN} collects necessary material from Fr\"olicher-Nijenhuis theory. Section~\ref{sec:hds} recalls 
some facts about hydrodynamic-type systems and a proof that (\ref{N-hierarchy}) is a hierarchy if $N$ has 
vanishing Nijenhuis tensor. In Section~\ref{sec:LM} we present the 
main content of \cite{Lore+Magri05} in a slightly different way, in the hope to somewhat more reveal the structure 
behind it. It includes an analysis of the structure of $T(M_k)$ and a direct proof of the vanishing of the 
Haantjes tensor $H(M_k)$.  Section~\ref{sec:conclusions} contains some final remarks.

\section{Some results from Fr\"olicher-Nijenhuis theory}
\label{sec:FN}

Let $N$ be a tensor field of type $(1,1)$ on an $n$-dimensional  smooth manifold $\cM$. In local coordinates (and using the
summation convention),
\bez
               N = N^\mu{}_\nu \, dx^\nu \otimes \frac{\partial}{\partial x^\mu} \, .
\eez
It determines a $C^\infty(\cM)$-linear map $\mathbb{X} \rightarrow \mathbb{X}$, where $\mathbb{X}$ is the space of vector fields on $\cM$ via
\bez 
           NX = N( X^\nu \frac{\partial}{\partial x^\nu} ) = X^\nu \, N(\frac{\partial}{\partial x^\nu} ) 
                                       = X^\nu \, N^\mu{}_\nu \, \frac{\partial}{\partial x^\mu} \, .
\eez
It also determines a $C^\infty(\cM)$-linear map $N^\ast : \bigwedge^1(\mathcal{M}) \rightarrow \bigwedge^1(\mathcal{M})$ 
of the space of differential 1-forms on $\cM$ via 
\bez
              N^\ast \alpha(X) := \alpha(N X) = (NX) \ins \alpha =: (N \ins \alpha)(X)  \, .
\eez
In local coordinates,
\bez
      N^\ast \alpha =  N^\ast ( \alpha_\mu \, dx^\mu ) = \alpha_\mu \, N^\ast dx^\mu  
                                                           = \alpha_\mu \, N^\mu{}_\nu \, dx^\nu \, .
\eez
The first appearance of $\ins$ in the preceding equation is the usual insertion of a vector field in a differential form (interior product), which is a graded derivation on the algebra $\bigwedge(\mathcal{M})$ of differential forms.   
$N \ins$ is defined as insertion of the vector field part of the type $(1,1)$ tensor $N$ in a differential form. This is a derivation:
\bez
        N \ins (\alpha \wedge \beta) = (N \ins \alpha) \wedge \beta + \alpha \wedge N \ins \beta 
\eez
for any differential forms $\alpha$ and $\beta$. 

If  $N_1,N_2$ are two tensor fields of type $(1,1)$, then 
\bez
            N_1 N_2 = N_1^\mu{}_\lambda \,   N_2^\lambda{}_\nu \, dx^\nu \otimes \frac{\partial}{\partial x^\mu} \, .
\eez
For 1-forms, we have $(N_1 N_2) \ins \alpha = N_2 \ins (N_1 \ins \alpha)$, i.e., 
$(N_1 N_2)^\ast \alpha =  N_2^\ast  N_1^\ast \alpha$.

Associated with any  type $(1,1)$ tensor field $N$ is a graded derivation $d_N$ of degree one \cite{Froe+Nije56}. 
For any differential form $\omega$, 
\bez
          d_N \omega = N \ins d \omega - d (N \ins \omega) \, ,
\eez
where $d$ is the exterior derivative. For a function $f$, we have $d_N f = N \ins df = N^\ast df$. 
If $N$ is the identity, then $d_N = d$. 
Furthermore, we have 
\bez
         d \, d_N = - d_N \, d   \, .
\eez

The \emph{Fr\"olicher-Nijenhuis bracket} of two type $(1,1)$ tensor fields is
\bez
    [N_1,N_2]_{\mathrm{FN}}(X,Y) &:=& (N_1 N_2 + N_2 N_1)[X,Y] + [N_1X,N_2Y] +[N_2X,N_1Y] \\
      & &   - N_1 ( [N_2X,Y] + [X,N_2Y] ) - N_2 ( [N_1X,Y] + [X,N_1Y] ) \, ,
\eez
where $X$ and $Y$ are any vector fields on $\mathcal{M}$. This bracket is bilinear and symmetric.

\begin{proposition}
\label{prop:[f1N1,f2N2]FN}
For type $(1,1)$ tensor fields $N_1,N_2$ and functions $f_1,f_2$,  
\bez
        [f_1 N_1, f_2 N_2]_{\mathrm{FN}} &=& f_1 f_2 \, [N_1, N_2]_{\mathrm{FN}} \\
   && + f_1 \Big( df_2(N_1 X) \, N_2 Y - df_2(X) \, N_1 N_2 Y - df_2(N_1 Y) \, N_2 X + df_2(Y) \, N_1 N_2 X \Big) \\
   && + f_2 \Big( df_1(N_2 X) \, N_1 Y - df_1(X) \, N_2 N_1 Y - df_1(N_2 Y) \, N_1 X + df_1(Y) \, N_2 N_1 X \Big) \, .
\eez
\end{proposition}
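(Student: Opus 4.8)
The plan is to substitute $f_1 N_1$ and $f_2 N_2$ directly into the definition of the Fr\"olicher-Nijenhuis bracket and to expand every term, relying on just two elementary facts. The first is that a type $(1,1)$ tensor field acts $C^\infty(\cM)$-linearly, so that $(f_i N_i) X = f_i \, N_i X$ and hence the composition satisfies $(f_1 N_1)(f_2 N_2) = f_1 f_2 \, N_1 N_2$. The second is the Leibniz rule for the Lie bracket of vector fields, $[g V, h W] = g h \, [V,W] + g \,(Vh)\, W - h\,(Wg)\, V$, where $Vh = dh(V)$ matches the differential notation of the statement.

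First I would dispose of the composition terms: since $(f_1 N_1)(f_2 N_2) + (f_2 N_2)(f_1 N_1) = f_1 f_2 (N_1 N_2 + N_2 N_1)$, they contribute exactly $f_1 f_2 (N_1 N_2 + N_2 N_1)[X,Y]$, which belongs to the $f_1 f_2$ block. Next I would expand the two crossed brackets. The Leibniz rule gives, for instance,
\[
   [f_1 N_1 X, \, f_2 N_2 Y] = f_1 f_2 \, [N_1 X, N_2 Y] + f_1 \, df_2(N_1 X)\, N_2 Y - f_2 \, df_1(N_2 Y)\, N_1 X ,
\]
and similarly for $[f_2 N_2 X, f_1 N_1 Y]$. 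For the four brackets in which only one argument carries a scalar, such as $[f_2 N_2 X, Y]$ and $[X, f_2 N_2 Y]$, the rule specialises to $[gV,W] = g[V,W] - (Wg)V$ and $[V,hW] = h[V,W] + (Vh)W$; after expanding I would then pull the outer tensor $f_1 N_1$ (or $f_2 N_2$) inside by linearity, which converts the scalar derivatives into terms of the shape $f_1\, df_2(\cdot)\, N_1 N_2(\cdot)$ and $f_2\, df_1(\cdot)\, N_2 N_1(\cdot)$.

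Finally I would collect. All terms proportional to $f_1 f_2$ reassemble, by inspection against the definition, into $f_1 f_2\,[N_1,N_2]_{\mathrm{FN}}(X,Y)$. The surviving terms separate cleanly according to which differential they contain: those carrying $df_2$ combine into the $f_1(\cdots)$ block and those carrying $df_1$ into the $f_2(\cdots)$ block, reproducing the stated expressions. A useful consistency check along the way is that each of these two blocks is antisymmetric under $X \leftrightarrow Y$, as it must be since the Fr\"olicher-Nijenhuis bracket is vector-valued and antisymmetric in its arguments. The computation involves no conceptual difficulty; the only genuine obstacle is bookkeeping---tracking signs and ensuring that every $f_1 f_2$ contribution is absorbed into the bracket combination so that none leaks into the derivative blocks. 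Sorting all terms first by the differential $df_i$ they carry makes this last identification immediate.
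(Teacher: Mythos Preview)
Your proposal is correct and is precisely the ``direct computation'' that the paper invokes as its proof. You have outlined the bookkeeping carefully, and the expansions you describe for each Lie-bracket term indeed collect exactly into the three blocks of the stated formula.
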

\begin{proof}
Direct computation.
\end{proof}

The \emph{Nijenhuis torsion} of a type $(1,1)$ tensor field $N$ is the type $(1,2)$ tensor field
\bez
     T(N)(X,Y) := \frac{1}{2} [N,N]_{\mathrm{FN}}(X,Y) = [NX,NY] + N \Big( N [X,Y] - [NX,Y] - [X,NY] \Big)  \, .
\eez
We have $d_N^2 =0$ if and only if $T(N)=0$ \cite{Froe+Nije56}.

\begin{proposition}
\label{prop:$[N^k,N^l]FN}
If $[N,N]_{\mathrm{FN}}=0$, then also $[N^k,N^l]_{\mathrm{FN}}=0$ for any positive integers $k,l$.
\end{proposition}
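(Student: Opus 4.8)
The plan is to translate the hypothesis into a Lie-derivative identity and then propagate it to all powers of $N$, which makes the vanishing of the bracket immediate and avoids any delicate inductive bookkeeping. First I would rewrite the Nijenhuis torsion of a type $(1,1)$ tensor $A$ in terms of the Lie derivative $\mathcal{L}$. A short rearrangement of the expression $T(A)(X,Y)=[AX,AY]+A(A[X,Y]-[AX,Y]-[X,AY])$ recorded above shows that $T(A)(X,Y)=(\mathcal{L}_{AX}A)(Y)-A\big((\mathcal{L}_X A)(Y)\big)$, where $(\mathcal{L}_X A)(Y)=[X,AY]-A[X,Y]$. Hence the hypothesis $[N,N]_{\mathrm{FN}}=2\,T(N)=0$ is equivalent to the hereditary identity
\[
  \mathcal{L}_{NX}N=N\,\mathcal{L}_X N \qquad \text{for every vector field } X .
\]

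Next I would propagate this identity to powers of $N$. Iterating it with $X$ replaced by $N^{k-1}X$ gives, by a one-line induction, $\mathcal{L}_{N^k X}N=N^k\,\mathcal{L}_X N$ for all $k\ge 0$. Combining this with the Leibniz rule for the Lie derivative over operator composition, $\mathcal{L}_X N^l=\sum_{i+j=l-1}N^i(\mathcal{L}_X N)N^j$, I obtain the key identity
\[
  \mathcal{L}_{N^k X}N^l=N^k\,\mathcal{L}_X N^l \qquad \text{for all } k,l\ge 0 .
\]
This step is the technical heart of the argument; everything afterwards is purely formal.

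Finally I would evaluate the bracket. Since $[\,\cdot\,,\cdot\,]_{\mathrm{FN}}$ is bilinear and symmetric and $T(A)=\tfrac12[A,A]_{\mathrm{FN}}$, polarization gives $[N^k,N^l]_{\mathrm{FN}}=T(N^k+N^l)-T(N^k)-T(N^l)$, which in Lie-derivative form reads
\[
  [N^k,N^l]_{\mathrm{FN}}(X,Y)=(\mathcal{L}_{N^k X}N^l)(Y)+(\mathcal{L}_{N^l X}N^k)(Y)-N^k\big((\mathcal{L}_X N^l)(Y)\big)-N^l\big((\mathcal{L}_X N^k)(Y)\big) .
\]
Substituting the key identity, the first term cancels the third and the second cancels the fourth, so $[N^k,N^l]_{\mathrm{FN}}=0$.

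The main obstacle is really the reformulation together with the propagation step, i.e. upgrading $T(N)=0$ to $\mathcal{L}_{N^kX}N=N^k\,\mathcal{L}_X N$; once this is in hand the conclusion is automatic. It is worth noting why I avoid a more naive induction that tries to lower the exponents directly inside $[N^k,N^l]_{\mathrm{FN}}$: the Fr\"olicher-Nijenhuis recursion only relates brackets of a fixed total degree $k+l$ to one another up to sign, which forces vanishing when $k+l$ is odd but not when $k+l$ is even, so that route does not close on its own. The Lie-derivative computation sidesteps this entirely. Alternatively, one could phrase the same argument through the derivations $d_{N^k}$, using $d_A\,d_B+d_B\,d_A=d_{[A,B]_{\mathrm{FN}}}$ together with $d_N^2=0$, but that formulation still requires the analogue of the propagation identity to overcome the same parity issue.
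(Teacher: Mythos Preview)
Your argument is correct and entirely self-contained. The paper, by contrast, gives no argument of its own: it simply invokes Lemma~1.1 of Kobayashi \cite{Koba62}. Your route via the hereditary identity $\mathcal{L}_{NX}N = N\,\mathcal{L}_X N$, its propagation to $\mathcal{L}_{N^kX}N^l = N^k\,\mathcal{L}_X N^l$, and the polarization formula $[A,B]_{\mathrm{FN}}(X,Y)=(\mathcal{L}_{AX}B)(Y)+(\mathcal{L}_{BX}A)(Y)-A((\mathcal{L}_XB)(Y))-B((\mathcal{L}_XA)(Y))$ is clean and makes the vanishing manifest for all pairs of powers at once. This is essentially the ``hereditary operator'' proof from the recursion-operator literature, and it has the advantage of exposing the mechanism rather than deferring to a reference. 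One minor point: your closing remark that a naive exponent-lowering induction on $[N^k,N^l]_{\mathrm{FN}}$ is blocked by a parity issue overstates the obstruction somewhat --- Kobayashi's lemma, and the analogous recursion the paper later uses for the hydrodynamic bracket in Proposition~\ref{prop:N^k_hierarchy}, do close inductively --- but this is only commentary and has no bearing on the validity of your proof.
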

\begin{proof}
This follows immediately from Lemma 1.1 in \cite{Koba62}.
\end{proof}

In particular,  $T(N)=0$ implies $T(N^k)=0$ for any  positive integer $k$.

\begin{remark}
\label{rem:d_N1,d_N2_bi-diff}
For two different tensor fields $N_1,N_2$ of type $(1,1)$, satisfying $[N_i,N_j]_{\mathrm{FN}}=0$, $i,j=1,2$,   
the corresponding graded derivations of degree one, $d_{N_1}$ and $d_{N_2}$, satisfy 
\bez
        d_{N_1}^2 = 0 = d_{N_2}^2 \, , \qquad d_{N_1} \, d_{N_2} = - d_{N_2} \, d_{N_1} \, ,
\eez
and thus supply the algebra of differential forms with the structure 
of a \emph{bi-differential calculus} \cite{DMH00a}. 
\end{remark}

\subsection{Bi-closed 1-forms}
Throughout this section, $N$ is a type $(1,1)$ tensor field with $T(N)=0$. 

\begin{definition}
A 1-form $\rho$ is \emph{bi-closed} (with respect to a tensor field $N$ of type $(1,1)$) if
\bez
            d \rho = 0 \qquad \mbox{and} \qquad d_N \rho = 0 \, .   
           \vspace{-1.cm}
\eez
\end{definition}

A 1-form $\rho$ is bi-closed if and only if there are functions $f,h$ locally, such that
\bez
          \rho = df \, , \qquad     N^\ast d f = d h \, .
\eez 

\begin{remark}
A bi-closed 1-form (with respect to $N$) has also been called  a \emph{conservation law} for $N$. 
This has been substantiated in the framework of (finite-dimensional) bi-Hamiltonian systems. 
In Section~\ref{sec:hds} we will recall that a bi-closed 1-form 
is also a conservation law of a hydrodynamic-type system formed with $N$ 
\cite{Stone73,Grif+Mehd97}. 
Bi-closed 1-forms also appeared under the names ``fundamental 1-forms"  \cite{Magr+Moro84} and 
``Hamiltonian forms" (see, e.g., \cite{Bone15}).  
\end{remark}

\begin{proposition}[Stone 1973 \cite{Stone73}]
If a 1-form $\rho$ is bi-closed, then also $(N^\ast)^k \rho$, for any positive integer $k$. \hfill $\Box$
\end{proposition}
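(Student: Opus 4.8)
The plan is to set $\rho_k := (N^\ast)^k\rho$ and prove by induction on $k$ that every $\rho_k$ is bi-closed, the crucial input being that the hypothesis $T(N)=0$ is precisely $d_N^2=0$. I would first isolate the elementary identity on $1$-forms. Since $N\ins\alpha = N^\ast\alpha$ for a $1$-form $\alpha$, the defining formula $d_N\omega = N\ins d\omega - d(N\ins\omega)$ specializes, for $d\alpha=0$, to
\[
   d(N^\ast\alpha) = -\,d_N\alpha .
\]
Applied with $\alpha=\rho_k$, this shows that once $\rho_k$ is bi-closed, $d\rho_{k+1} = d(N^\ast\rho_k) = -d_N\rho_k = 0$, so $\rho_{k+1}$ is automatically $d$-closed.

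It remains to propagate $d_N$-closedness, and here the torsion hypothesis enters. Assume inductively that $\rho_k$ is bi-closed (the case $k=0$ being the assumption on $\rho$). By the Poincar\'e lemma there is, locally, a function $f^{(k)}$ with $\rho_k = df^{(k)}$; using $d_Nf = N^\ast df$ for functions we then have $\rho_{k+1} = N^\ast\rho_k = N^\ast df^{(k)} = d_Nf^{(k)}$. Consequently
\[
   d_N\rho_{k+1} = d_N\big(d_Nf^{(k)}\big) = d_N^2 f^{(k)} = 0 ,
\]
by $T(N)=0$. Together with the $d$-closedness established above (which also follows directly from $d\rho_{k+1} = d\,d_Nf^{(k)} = -d_N\,df^{(k)} = -d_N\rho_k = 0$, using $d\,d_N=-d_N\,d$), this shows that $\rho_{k+1}$ is bi-closed and closes the induction.

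The step I expect to be the main obstacle is exactly the vanishing of $d_N\rho_{k+1}$. Working only at the level of forms, the displayed identity gives merely $d_N\rho_{k+1} = -d\rho_{k+2}$, which shifts the difficulty one link further down the chain without ever resolving it and never uses the torsion-free condition. The way out is to realize $\rho_{k+1}$ concretely as $d_N$ of a \emph{local} potential, after which $d_N^2=0$ disposes of it in one stroke. Thus the argument rests on three ingredients already available: local exactness (Poincar\'e) to manufacture the potentials $f^{(k)}$, the graded anticommutation $d\,d_N=-d_N\,d$ to carry $d$-closedness along the chain, and the equivalence $T(N)=0\Leftrightarrow d_N^2=0$ to carry $d_N$-closedness. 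Finally, since $d$ and $d_N$ are globally defined operators, the local vanishing of $d\rho_{k+1}$ and $d_N\rho_{k+1}$ on every chart yields their global vanishing, so no patching of the local potentials is required.
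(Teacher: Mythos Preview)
Your proof is correct. The paper does not supply its own proof of this proposition (it simply cites Stone and closes with $\Box$), but the argument you give is precisely the mechanism the paper spells out in the immediately following subsection on the non-local Lenard chain: write the bi-closed form locally as $df$, observe $N^\ast df = d_N f$, and then use $d\,d_N = -d_N\,d$ together with $d_N^2 = 0$ to propagate both closedness conditions.
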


\subsubsection{A non-local chain of bi-closed 1-forms}
Let $T(N)=0$ and $\rho_1$ be a given bi-closed 1-form. Since $d \rho_1 = 0$ implies the existence 
of a function $a_1$ such that
\bez
   \rho_1 = d a_1 \, ,
\eez
it follows that 
\bez
    \rho_2 := d_N a_1 
\eez
is also a bi-closed 1-form:
\bez
       d \rho_2 = d d_N a_1 = - d_N d a_1 = - d_N \rho_1 = 0 \, , \quad d_N \rho_2 = d_N^2 a_1 = 0 \, .  
\eez
This leads to a sequence of bi-closed 1-forms,
\be
    \rho_{k+1} = d a_{k+1} = d_N a_k \, , \qquad k=1,2,\ldots \, .   \label{Lenard_chain}
\ee
To start with, we may choose a solution $a_0$ of the linear equation $d d_N a_0 =0$ and set $\rho_1 = d_N a_0$.

A sequence of the above form has sometimes been called a \emph{Lenard chain} \cite{Lax76,Kosm+Magr96,Magr03,Prau+Smir05}. 
Also see Theorem~3.4 in \cite{Gelf+Dorf79}. This works more generally for any bi-differential calculus \cite{DMH00a}. 
In particular, we may replace $d,d_N$ by $d_{N_1},d_{N_2}$ under the assumptions in Remark~\ref{rem:d_N1,d_N2_bi-diff}.

\begin{remark} 
We further note that this construction does \emph{not} make use of the graded derivation property of $d$ and $d_N$. 
Therefore it also works for any two anticommuting linear maps $D_1, D_2 : \mathbb{M}^r \rightarrow  \mathbb{M}^{r+1}$, 
acting on a graded linear space $\mathbb{M}$, 
provided they are flat, i.e., $D_1^2 = 0 = D_2^2$, and that $D_1 \rho = 0$  for (at least some) $\rho \in  \mathbb{M}^1$ implies 
that there is a $\chi \in  \mathbb{M}^0$ such that $\rho = D_1 \chi$. Also see \cite{DMH00e}.
\end{remark}

\section{Some fact about systems of hydrodynamic type}
\label{sec:hds}
Associated with any type $(1,1)$ tensor field $M$ on an $n$-dimensional smooth manifold $\cM$ is an autonomous 
system of partial differential equations,
\be
       \frac{\pa x^\mu}{\pa t} + M^\mu{}_\nu \, \frac{\pa x^\nu}{\pa y}  = 0 \qquad \mu =1,\ldots,n
                  \, ,  \label{hydrodyn_sys}
\ee
which is of ``hydrodynamic type" (see, e.g., \cite{Tsar91}). Here $t$ and $y$ are two independent variables. 

If there are functions $f$ and $h$ such that 
\be
                  M^\ast df = dh \, ,    \label{hd_conserv_law}
\ee
thus
\bez
                   \frac{\pa h}{\pa x^\nu} = \frac{\pa f}{\pa x^\mu} \, M^\mu{}_\nu  \, ,
\eez
then we obtain
\bez
         \frac{\pa f}{\pa t} + \frac{\pa h}{\pa y} = \frac{\pa f}{\pa x^\mu} \,  \frac{\pa x^\mu}{\pa t}
     + \frac{\pa h}{\pa x^\nu} \, \frac{\pa x^\nu}{\pa y} 
  = \frac{\pa f}{\pa x^\mu} \, \left( \frac{\pa x^\mu}{\pa t} + M^\mu{}_\nu \, \frac{\pa x^\nu}{\pa y}  \right) = 0 
\eez
along any solution $x^\mu(t,y)$ of the above hydrodynamic-type system. The last equation has the 
familiar form of a conservation law. Hence, any pair of functions satisfying  (\ref{hd_conserv_law}) is a conservation 
law for the above hydrodynamic-type system. 

If $T(M)=0$, given a bi-closed 1-form $\rho$, it implies the local existence of functions $f,h$ such that
 (\ref{hd_conserv_law}) holds. But the latter does not require $T(M)=0$.

\subsection{Compatible systems of hydrodynamic type}

Let us assume that two systems of hydrodynamic type,
\bez
           \frac{\pa x^\mu}{\pa s} + A^\mu{}_\nu \, \frac{\pa x^\nu}{\pa y}  = 0 \, , \qquad
           \frac{\pa x^\mu}{\pa t} + B^\mu{}_\nu \, \frac{\pa x^\nu}{\pa y}  = 0 \, ,
\eez
admit a common solution $x^\mu(s,t,y)$. Then
\bez
      0 &=& \frac{\pa^2 x^\mu}{\pa s \pa t}  -  \frac{\pa^2 x^\mu}{\pa t \pa s} \\
    &=& (  A^\mu{}_{\kappa,\nu} \, B^\nu{}_\lambda + A^\mu{}_\nu \, B^\nu{}_{\kappa,\lambda}
            - B^\mu{}_{\kappa,\nu} \, A^\nu{}_\lambda - B^\mu{}_\nu \, A^\nu{}_{\kappa,\lambda} )
             \,  \frac{\pa x^\kappa}{\pa y} \,  \frac{\pa x^\lambda}{\pa y} \\
    && +(A^\mu{}_\nu \, B^\nu{}_\kappa - B^\mu{}_\nu \, A^\nu{}_\kappa) \, \frac{\pa^2 x^\kappa}{\pa y^2}
\eez
along the solution.
If there is a sufficient number of common solutions, this implies that 
$A$ and $B$ have to commute, as matrices, and they have to satisfy
\bez
   B^\nu{}_{(\kappa} \, A^\mu{}_{\lambda),\nu} \,  + A^\mu{}_\nu \, B^\nu{}_{(\kappa,\lambda)}
            - A^\nu{}_{(\kappa} B^\mu{}_{\lambda),\nu} \,  - B^\mu{}_\nu \, A^\nu{}_{(\kappa,\lambda)} = 0 \, .
\eez
Here a comma indicates a partial derivative and round brackets symmetrization. 
Contracting the last expression with the components $X^\kappa, X^\lambda$ of an arbitrary vector field $X$, 
and using $[A,B]=0$ (commutator of matrices), it can be cast into the form
\bez
                 [A,B]_X = 0   \qquad \forall X \in \mathbb{X} \, ,
\eez
where
\bez
                 [A,B]_X := [AX,BX] - A \,[X,BX] + B \, [X,AX]  \, . 
\eez
Also see \cite{PSS96}. 
The bracket defined in this way (with respect to a vector field $X$) is bilinear and antisymmetric. 
If $I$ denotes the identity, then $[A,I]_X =0$. For a function $f$, we have
\bez
     [A, f B]_X &=& f \, [A,B]_X + [AX, f] \, BX - [X,f] \, ABX \\
                     &=& f \, [A,B]_X + (AX f) \, BX - (Xf) \, ABX \\
       &=& f \, [A,B]_X + df(AX) \, BX - df(X) \, ABX  \, .
\eez
In particular,
\bez
      [f,g]_X = 0 \, ,
\eez
for functions $f,g$ (regarded as multiplication operators).

\begin{definition}
Two systems of hydrodynamic type with $(1,1)$ tensor fields $A$ and $B$ on a manifold $\cM$ are 
called \emph{compatible} if $A$ and $B$ commute as matrices and if $[A,B]_X =0$  
for all vector fields $X$ on $\cM$. 
\end{definition}

\subsection{A hydrodynamic-type hierarchy}

We recall a well-known result.

\begin{proposition}
\label{prop:N^k_hierarchy}
Let $N$ be a tensor field of type $(1,1)$ with $T(N)=0$.
The family of hydrodynamic-type flows
\bez
      \frac{\pa x^\mu}{\pa t_k} + (N^k)^\mu{}_\nu \, \frac{\pa x^\nu}{\pa y}  = 0 \qquad k=1,2,\ldots 
\eez 
forms a \emph{hierarchy}, i.e., any two members are compatible.
\end{proposition}
\begin{proof} Clearly, the matrices formed by the components of powers of $N$ commute. Furthermore,
\bez
  [N^{i+1},N^{j+1}]_X &=&  [N N^i X, N N^j X] - N^{i+1} \, [X,N^{j+1} X] + N^{j+1} \, [X,N^{i+1} X] \\
    &=& - N^2 [N^i X,N^j X] + N \big( [N^{i+1} X , N^j X] + [N^i X, N^{j+1} X]   \\
     &&    - N^i \, [X,N^{j+1} X] + N^j \, [X,N^{i+1} X] \big) \qquad \mbox{using $T(N)(N^i X,N^j X)=0$} \\
    &=& -N^2 \big( [N^i,N^j]_X + N^i [X,N^j X] - N^j [X,N^i X] \big) \\
    &&   + N \big( [N^{i+1},N^j]_X + [N^i, N^{j+1}]_X
            + N^{i+1} [X,N^j X] - N^{j+1} [X,N^i X] \big) \\
    &=&  -N^2 [N^i,N^j]_X + N \big( [N^{i+1},N^j]_X + [N^i, N^{j+1}]_X \big)  \, .
\eez
 From this recursion we obtain $[N^i, N^j]_X =0$, for $i,j=0,1,2,\ldots$, by induction. 
Suppose that the brackets up to $[N^i,N^j]_X$ vanish. Then the above recursion yields
\bez
       [N^{i+1},N^j]_X = N \, [N^{i+1},N^{j-1}]_X \, ,
\eez
which, by iteration, leads to
\bez
       [N^{i+1},N^j]_X = N^j \, [N^{i+1},I]_X = 0 \, .
\eez
Now the above recursion reduces to 
\bez
      [N^{i+1},N^{j+1}]_X = N \, [N^i, N^{j+1}]_X = \ldots = N^{i+1} [I,N^{j+1}]_X = 0 \, .
\eez
\end{proof}

\section{The Lorenzoni-Magri hierarchy}
\label{sec:LM}

\begin{lemma}
\label{lem:M_recursion}
Let $T(N)=0$ and $A_k$, $k=0,1,2,\ldots$, tensor fields of type $(1,1)$. Let
\bez
       M_{k+1} = N M_k + A_k \, , \qquad M_0 = I \, .
\eez
Then
\bez
      [M_{i+1} , M_{j+1}]_X &=& - N^2 [M_i , M_j]_X + N \left( [M_{i+1}, M_j]_X - [M_{j+1} , M_i ]_X 
             - [A_i,M_j]_X + [A_j,M_i]_X \right) \\
      &&  + [M_{i+1} , A_j]_X - [M_{j+1} , A_i ]_X  - [A_i,A_j]_X \, .
\eez
\end{lemma}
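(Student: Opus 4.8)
The statement is a purely algebraic identity for the bracket
$[\cdot,\cdot]_X$, so I would prove it by direct expansion, using only the
defining formula
$[A,B]_X = [AX,BX] - A\,[X,BX] + B\,[X,AX]$, the bilinearity of this bracket in
its two slots, and the recursion $M_{k+1} = N M_k + A_k$. The key is to mimic
the computation in the proof of Proposition~\ref{prop:N^k_hierarchy}: there one
expanded $[N^{i+1},N^{j+1}]_X$ by peeling off one factor of $N$ from each
argument and then reorganizing with the help of the Nijenhuis-torsion identity
$T(N)=0$. Here the same peeling works, but each $M_{i+1}$ carries an extra
additive term $A_i$, so the expansion will sprout additional $A$-terms that must
be collected into the three correction brackets appearing on the right-hand
side.

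\textbf{Step 1: expand the outer bracket.} First I would write
$M_{i+1}X = N M_i X + A_i X$ and $M_{j+1}X = N M_j X + A_j X$, and substitute
these into
$[M_{i+1},M_{j+1}]_X = [M_{i+1}X,M_{j+1}X] - M_{i+1}[X,M_{j+1}X] + M_{j+1}[X,M_{i+1}X]$.
Using bilinearity of the ordinary Lie bracket $[\cdot,\cdot]$ of vector fields,
the leading term $[M_{i+1}X,M_{j+1}X]$ splits into four pieces:
$[N M_i X, N M_j X]$, $[N M_i X, A_j X]$, $[A_i X, N M_j X]$, and
$[A_i X, A_j X]$. The first of these is exactly the piece that the
Nijenhuis-torsion identity will act on; the remaining three are the source of
the $A$-corrections.

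\textbf{Step 2: use $T(N)=0$ on the principal term.} On
$[N M_i X, N M_j X]$ I would apply the definition of the Nijenhuis torsion with
the two vector fields $M_i X$ and $M_j X$, namely
$T(N)(M_i X, M_j X) = [N M_i X, N M_j X] + N\bigl(N[M_i X, M_j X] - [N M_i X, M_j X] - [M_i X, N M_j X]\bigr) = 0$,
to replace $[N M_i X, N M_j X]$ by
$-N^2[M_i X, M_j X] + N[N M_i X, M_j X] + N[M_i X, N M_j X]$.
This is the single place where the hypothesis $T(N)=0$ enters, and it is what
produces the $-N^2[M_i,M_j]_X$ term and the leading $N(\cdots)$ terms on the
right-hand side, just as in Proposition~\ref{prop:N^k_hierarchy}.

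\textbf{Step 3: reassemble into the bracket $[\cdot,\cdot]_X$ form.} The
delicate bookkeeping is recognizing the various Lie-bracket combinations as
$[\cdot,\cdot]_X$-brackets. For instance, $N^2[M_i X, M_j X]$ together with the
$M_{i+1}$- and $M_{j+1}$-contributions to $M_{i+1}[X,M_{j+1}X]$ and
$M_{j+1}[X,M_{i+1}X]$ (again using $M_{i+1}=N M_i + A_i$) should collapse,
modulo $A$-terms, into $N^2[M_i,M_j]_X$; similarly the mixed terms
$N[N M_i X, M_j X]$ and $N[M_i X, N M_j X]$ assemble into
$N[M_{i+1},M_j]_X$ and $-N[M_{j+1},M_i]_X$ once the compensating
$A$-pieces are added and subtracted. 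I expect the main obstacle to be precisely
this matching: keeping every $A_iX$, $A_jX$, $N M_i X$, $N M_j X$ term correctly
signed while verifying that the leftover $A$-contributions regroup into exactly
$-N[A_i,M_j]_X + N[A_j,M_i]_X + [M_{i+1},A_j]_X - [M_{j+1},A_i]_X - [A_i,A_j]_X$.
This is routine but error-prone, so I would organize the computation by tracking
separately the ``pure $N$'' terms, the ``single $A$'' terms, and the ``double
$A$'' term, checking that each group matches the corresponding terms in the
claimed identity.
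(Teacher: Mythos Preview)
Your proposal is correct and follows essentially the same route as the paper: peel off one factor via $M_{k+1}=NM_k+A_k$, apply $T(N)(M_iX,M_jX)=0$ to the leading piece, and regroup into $[\cdot,\cdot]_X$-brackets. The only organizational difference is that the paper first invokes bilinearity of the bracket $[\cdot,\cdot]_X$ in its two $(1,1)$-tensor arguments to write
\[
[M_{i+1},M_{j+1}]_X = [NM_i,NM_j]_X + [NM_i,A_j]_X + [A_i,NM_j]_X + [A_i,A_j]_X,
\]
so that the three $A$-terms are already packaged as $[\cdot,\cdot]_X$-brackets and immediately yield $[M_{i+1},A_j]_X - [M_{j+1},A_i]_X - [A_i,A_j]_X$ after one more substitution $NM_k = M_{k+1}-A_k$. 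Only the single term $[NM_i,NM_j]_X$ then needs to be expanded via its Lie-bracket definition and hit with $T(N)=0$, exactly as in Proposition~\ref{prop:N^k_hierarchy}. This cuts down the ``routine but error-prone'' bookkeeping you anticipate in Step~3, but conceptually it is the same computation you describe.
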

\begin{proof} Using bilinearity of the bracket,
\bez
     [M_{i+1} , M_{j+1}]_X = [N M_i , N M_j]_X + [N M_i,A_j]_X + [A_i , NM_j]_X + [A_i,A_j]_X \, .
\eez   
Expanding the first term on the right hand side according to its definition and then using vanishing 
Nijenhuis torsion in the form 
$T(N)(M_i X, M_j X)=0$, we obtain
\bez
    [M_{i+1} , M_{j+1}]_X &=& N \, \big( [M_iX,NM_jX] - [M_j X,N M_i X] - M_i [X,NM_jX] + M_j [X,NM_iX]  \\
   &&  - N \, [M_iX,M_jX]  \big) + [N M_i,A_j]_X + [A_i , NM_j]_X + [A_i,A_j]_X  \, .
\eez
Reformulating the terms in the round brackets back in terms of brackets $[ \, , \, ]_X$, we  
reach the stated formula. We note that the steps are essentially the same as those in the proof of 
Proposition~\ref{prop:N^k_hierarchy}. 
\end{proof}

If $T(N)=0$, then the hydrodynamic-type systems associated with any two polynomials in $N$ with constant 
coefficients are compatible. 
The following is a generalization of Proposition~\ref{prop:N^k_hierarchy} and shows that the coefficients 
need not be constant in order to have a hierarchy.

\begin{theorem}[Lorenzoni and Magri 2005 \cite{Lore+Magri05}]
\label{thm:Lorenzoni-Magri}
Let $T(N)=0$ and
\bez
     M_{k+1} = N M_k - a_k \, I \, , \qquad
     d a_k = M_k^\ast \, d a_0 \, ,
\eez
with $M_0 = I$ and scalars $a_k$, $k=0,1,2,\ldots$. Then
\bez
      \frac{\pa x^\mu}{\pa t_k} + (M_k)^\mu{}_\nu \, \frac{\pa x^\nu}{\pa y}  = 0 \qquad k=0,1,2,\ldots 
\eez 
constitute a hierarchy.
\end{theorem}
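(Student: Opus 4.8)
The plan is to verify the two conditions in the definition of compatibility for every pair of members: that the matrices $M_i, M_j$ commute and that $[M_i,M_j]_X = 0$ for all vector fields $X$. The commutativity is immediate, since solving the recursion $M_{k+1} = N M_k - a_k I$ gives $M_k = N^k - \sum_{i=0}^{k-1} a_{k-1-i}\, N^i$, a polynomial in $N$ with scalar coefficients, so any two $M_i, M_j$ commute as matrices. The substance lies in the bracket condition, for which I would apply Lemma~\ref{lem:M_recursion} with $A_k = -a_k I$ and proceed by induction.

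Before inducting, I would extract two algebraic consequences of the chain relation $da_k = M_k^\ast\, da_0$. First, since $(M_i M_j)^\ast = M_j^\ast M_i^\ast$ and the $M_k$ commute as matrices, one obtains the symmetry $M_j^\ast\, da_i = M_i^\ast\, da_j$, that is $da_i(M_j X) = da_j(M_i X)$. Second, from $M_{i+1} M_j - M_i M_{j+1} = a_j M_i - a_i M_j$ (a direct consequence of the recursion together with $[M_i,N]=0$), applying $(\cdot)^\ast da_0$ yields the shift identity $da_{i+1}(M_j X) - da_i(M_{j+1} X) = a_j\, da_i(X) - a_i\, da_j(X)$. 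These two identities are the engine of the cancellation below.

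For the inductive step I would specialize the recursion of Lemma~\ref{lem:M_recursion} to $A_k = -a_k I$. Using the product rule $[A, f B]_X = f\,[A,B]_X + df(AX)\, BX - df(X)\, ABX$ established above (together with $[A,I]_X = 0$), all $A$-dependent brackets reduce to combinations of $X$, $M_{i+1} X$ and $M_{j+1} X$ with coefficients built from the $a_k$ and $da_k$; in particular $[A_i, A_j]_X = 0$. I would then induct on the index sum $i+j$: the base cases, in which one index is $0$, vanish because $[I,M]_X = 0$, and in the step the three lower-order brackets $[M_i,M_j]_X$, $[M_{i+1},M_j]_X$ and $[M_{j+1},M_i]_X$ all vanish by hypothesis, since each has strictly smaller index sum.

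What remains is to show the surviving inhomogeneous terms cancel. Rewriting the factors $N M_k X$ as $M_{k+1} X + a_k X$ and inserting the two identities above, the contributions proportional to $M_{i+1} X$, to $M_{j+1} X$, and to $X$ each collect to zero, leaving $[M_{i+1},M_{j+1}]_X = 0$. I expect the main obstacle to be precisely this final bookkeeping: the cancellation is exact but not evident term-by-term, and it hinges on pairing the symmetry $da_i(M_j X) = da_j(M_i X)$ with the shift identity for $da_{i+1}(M_j X) - da_i(M_{j+1} X)$. Getting the order of the $(\cdot)^\ast$ operations and the signs right in those two identities is where I would be most careful.
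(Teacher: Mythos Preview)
Your proposal is correct and follows essentially the same route as the paper: apply Lemma~\ref{lem:M_recursion} with $A_k=-a_k I$, show that the $A$-dependent (inhomogeneous) part of the recursion vanishes thanks to the symmetry $M_i^\ast da_j = M_j^\ast da_i$, and then induct. Two small remarks. First, your ``shift identity'' is not an independent ingredient: the paper achieves the cancellation using \emph{only} the symmetry identity, by rewriting $M_{k+1}X = M_k(NX) - a_k X$ so that the coefficient of $X$ becomes $da_i(M_j NX)-da_j(M_i NX)$, which is the symmetry evaluated at $NX$; your shift identity encodes exactly this. Second, your final bookkeeping groups the surviving terms by $M_{i+1}X$, $M_{j+1}X$, and $X$, but after expanding the $N(\cdots)$ piece of the lemma there is also a contribution proportional to $NX$ (coming from $N$ acting on the $X$-part of $[M_i,a_j]_X-[M_j,a_i]_X$); its coefficient is $da_j(M_iX)-da_i(M_jX)$, which vanishes by the symmetry, so this is only an omission in the description, not in substance. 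Your induction on the index sum $i+j$ is a clean alternative to the paper's iterative induction borrowed from Proposition~\ref{prop:N^k_hierarchy}.
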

\begin{proof}
Setting $A_k = - a_k \, I$ in Lemma~\ref{lem:M_recursion}, we have
\bez
   &&  [M_{i+1} , M_{j+1}]_X +N^2 [M_i,M_j]_X - N \, (  [M_{i+1},M_j]_X - [M_{j+1},M_i ]_X )  \\
  &=& N ( [M_i,a_j]_X - [M_j,a_i]_X) - [M_{i+1} , a_j] + [M_{j+1},a_i]_X \\
  &=&  ( - da_i(M_j X) + da_j(M_i X) ) \, NX - da_i(X) \, (M_{j+1} - NM _j) X \\
     &&      + da_j(X) \, (M_{i+1} - NM_i) X  
           + ( da_i(M_{j+1} X) - da_j(M_{i+1} X) ) \, X  \\
  &=& ( da_i(M_{j+1} X) - da_j(M_{i+1} X) + a_i \, da_j(X) - a_j \, da_i(X) ) \, X \\
     &&   - ( da_i(M_j X) - da_j(M_i X) ) \, NX   \\
   &=&  ( da_i(M_j N X) - da_j(M_i N X) ) \, X - ( da_i(M_j X) - da_j(M_i X) ) \, NX   \\
   &=&  ( M_j^\ast da_i(N X) - M_i^\ast da_j(N X) ) \, X - ( M_j^\ast da_i(X) - M_i^\ast da_j(X) ) \, NX       \, .
\eez
We observe that the last expression vanishes if we set $d a_k = M_k^\ast d a_0$, since then
\bez
      M_i^\ast d a_j = M_i^\ast \, M_j^\ast \, da_0 = M_j^\ast \, M_i^\ast \, da_0 
                                  = M_j^\ast da_i    \qquad i,j=0,1,2, \ldots \, .
\eez
Using the resulting recursion relation (which has the same form as the one in the proof of
Proposition~\ref{prop:N^k_hierarchy}), $[M_i,M_j]_X=0$ follows by induction.
\end{proof}

For explorations of hydrodynamic-type hierarchies of the above form, see \cite{Lore+Magri05,Lore06,Lore+Perl23}.

\begin{remark}
We note that, for $k=1,2,\ldots$, the relation $d a_k = M_k^\ast \, d a_0$ in the last proposition is
a conservation law of the hydrodynamic-type system with $M_k$. 
Furthermore, it implies 
\bez
         d a_{k+1} = M_{k+1}^\ast da_0 = N^\ast M^\ast_k da_0 - a_k \, da_0 = N^\ast da_k - a_k \, da_0 
                            = d_N a_k  - a_k \, da_0 \, ,
\eez 
which is (\ref{Lenard_chain}) modified by the last term.
Regarding $da_0$ as the gauge potential 1-form of a connection, we can introduce the 
covariant  exterior derivative
\bez
            D_N := d_N - da_0 \wedge \, ,
\eez
acting on $\bigwedge(\cM)$. $D_N$ is $\mathbb{R}$-linear and flat, i.e., $D_N^2 = 0$. It also satisfies $d \, D_N = - D_N \, d$. 
This makes contact with the approach to 
conservation laws of integrable systems in the bi-differential calculus, respectively bi-complex framework 
\cite{DMH00a,DMH00e}. 
The above recursion relation can now be more compactly expressed as 
\be
                d a_{k+1} = D_N a_k  \, .    \label{module_conserv_law_chain}
\ee
\end{remark}

\begin{remark}
The $M_k$ determine graded derivations of degree one, $d_{M_k}$, but with $d_{M_k}^2 \neq 0$.
For example, using the identities
\bez
        d_{N_1 + N_2} = d_{N_1} + d_{N_2} \, , \qquad 
        d_{f N} = f \, d_N - df \wedge N \ins  \, ,
\eez
on $\bigwedge(\mathcal{M})$,  for tensor fields of type $(1,1)$ and a function $f$, we find 
\bez
      d_{M_1} = d_N - d_{a_0 I} = d_N - a_0 \, d + da_0 \wedge I \ins \, ,
\eez
which satisfies
\bez
      d_{M_1}^2 \omega = d_N^2 \, \omega + r \, d_N da_0 \wedge \omega + da_0 \wedge d_N \omega - d_N a_0 \wedge d \omega 
\eez
for any $r$-form $\omega$. We find that
\bez
           d_{M_1} a_0 = D_N a_0 \, ,
\eez
in accordance with $d_{M_k} a_0 = M_k^\ast \, da_0 = d a_k$ and (\ref{module_conserv_law_chain}). 
\end{remark}

\subsection{The Haantjes tensor}

The Nijenhuis torsion of $M_k$ (with $N \neq I$) does \emph{not} vanish unless the functions 
$a_k$ are constants. 
In particular, using Proposition~\ref{prop:[f1N1,f2N2]FN} and bilinearity of the Fr\"olicher-Nijenhuis bracket, we obtain
\bez
    T(M_1)(X,Y) = - [N,a_0]_{\mathrm{FN}}(X,Y) 
                           = - da_0(NY) \, X + da_0(NX) \, Y + da_0(Y) \, NX  - da_0(X) \, NY   \, .
\eez
The \emph{Haantjes tensor} of a type $(1,1)$ tensor field $M$ is given in terms of its Nijenhuis torsion by
\bez
           H(M)(X,Y) := M^2 T(M)(X,Y) + T(M)(MX,MY) - M \big( T(M)(MX,Y) + T(M)(X,MY) \big) \, .
\eez
Using the general property
\bez
           H(M + f \, I) = H(M) 
\eez
of the Haantjes tensor (Proposition~1 in \cite{Bogo96}), for any type $(1,1)$ tensor field $M$ and 
any function $f$, we have
\bez
           H(M_1) = H(N) = 0 \, .   
\eez

\begin{theorem}
Let type $(1,1)$ tensor fields $M_k$, $k=1,2,\ldots$, be given by the recursion relation in 
Theorem~\ref{thm:Lorenzoni-Magri}, with functions $a_k$ and $T(N)=0$.\footnote{The recursion relation in 
Theorem~\ref{thm:Lorenzoni-Magri} for the functions $a_k$ is not used.}
Then $H(M_k) =0$,  $k=1,2,\ldots$.  
  \hfill $\Box$
\end{theorem}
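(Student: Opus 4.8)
The plan is to prove $H(M_k)=0$ by first obtaining an explicit closed form for the Nijenhuis torsion $T(M_k)$, and then showing that this particular form is annihilated by the Haantjes operator. The footnote tells us the recursion $da_k = M_k^\ast\,da_0$ is \emph{not} needed, so I treat the $a_k$ simply as arbitrary functions and work purely with the algebraic recursion $M_{k+1} = N M_k - a_k I$. The key structural observation, suggested by the displayed formula for $T(M_1)$, is that each $T(M_k)$ should be expressible in the schematic ``rank-one'' shape
\bez
   T(M_k)(X,Y) = \alpha_k(Y)\,P_k X - \alpha_k(X)\,P_k Y + \beta_k(Y)\,Q_k X - \beta_k(X)\,Q_k Y
\eez
for suitable 1-forms $\alpha_k,\beta_k$ built from the $da_i$ via the $M_j^\ast$, and suitable type $(1,1)$ tensors $P_k,Q_k$ that are polynomials in $N$ (for $k=1$ one reads off $\alpha_0 = da_0$, with $P_1 = I$ and the $N$-part giving the remaining terms). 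First I would derive a recursion for $T(M_{k+1})$ from $M_{k+1}=NM_k-a_k I$, using $T(N)=0$, bilinearity and symmetry of the Fr\"olicher-Nijenhuis bracket, and Proposition~\ref{prop:[f1N1,f2N2]FN} to handle the $a_k I$ term; this is the exact analogue at the level of $T$ of what Lemma~\ref{lem:M_recursion} does at the level of $[\,\cdot\,,\,\cdot\,]_X$.

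Once the torsion is pinned down in the antisymmetrized rank-one form above, the second step is to insert it into the definition of $H(M_k)$. The crucial algebraic fact is that the Haantjes operator, viewed as acting on type $(1,2)$ tensors of the special form $S(X,Y) = \gamma(Y)\,RX - \gamma(X)\,RY$ where $R$ commutes with $M_k$, kills such tensors whenever the 1-form $\gamma$ is an eigen-covector pattern compatible with $M_k^\ast$. Concretely, because every tensor appearing ($N$, the $M_j$, and hence $P_k,Q_k$) is a polynomial in $N$, they all commute with $M_k$, and the 1-forms entering are of the form $M_j^\ast\,da_i$. For a tensor of the shape $\gamma(Y)RX-\gamma(X)RY$ with $[R,M]=0$, a direct expansion of $M^2 S(X,Y)+S(MX,MY)-M(S(MX,Y)+S(X,MY))$ produces terms in which the covector arguments reorganize as $\gamma(M Y)-\gamma(MY)$-type cancellations combined with $M^\ast$ acting through; the antisymmetry in $X,Y$ forces the surviving combination to vanish. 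I would isolate this cancellation as a standalone algebraic sublemma, stated for an arbitrary rank-one antisymmetric $S$ with an $M$-commuting coefficient tensor, and then apply it term by term.

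The main obstacle is establishing the precise form of $T(M_k)$ and verifying that the coefficient tensors $P_k,Q_k$ really do commute with $M_k$ and that the covectors really factor through powers of $N^\ast$. The torsion recursion will generate, besides the clean rank-one pieces, cross terms of the form $da_i(M_k X)$ and $da_i(X)\,N^{\,p}Y$ whose bookkeeping must be organized so that the whole of $T(M_k)$ collapses into a sum of a bounded number of rank-one antisymmetric blocks, each with an $N$-polynomial coefficient. I expect an induction on $k$ to be the right vehicle: assuming $T(M_k)$ has the claimed shape, feed it through the torsion recursion and Proposition~\ref{prop:[f1N1,f2N2]FN} to show $T(M_{k+1})$ has the same shape with updated $1$-forms and $N$-polynomial coefficients. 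After that, applying the rank-one Haantjes-annihilation sublemma to each block concludes $H(M_k)=0$. A cleaner alternative I would check in parallel is whether $H$, like the Nijenhuis torsion, has a usable behaviour under the map $M\mapsto NM - aI$; if one can show $H(M_{k+1})$ equals $H$ of a polynomial in $N$ (extending $H(M+fI)=H(M)$ from Proposition~1 of \cite{Bogo96} to the multiplicative step), then $H(M_k)=H(N^{\,k})=0$ follows immediately from $T(N)=0$, bypassing the explicit torsion computation entirely.
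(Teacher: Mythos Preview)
Your proposal is correct and matches the paper's two-step strategy: (i) show that $T(M_k)$ is a sum of antisymmetrized rank-one pieces $\gamma(Y)\,RX-\gamma(X)\,RY$ with $R$ a polynomial in $N$, and (ii) verify that each such piece is annihilated by the Haantjes operator whenever $[R,M_k]=0$---the paper's Lemma is precisely your sublemma with $R=N^i$. Two simplifications relative to your plan: the paper obtains (i) not by induction on $k$ but by writing out the closed form $M_k=N^k-\sum_{i=0}^{k-1}a_{k-1-i}N^i$ and expanding $\tfrac12[M_k,M_k]_{\mathrm{FN}}$ directly via bilinearity together with Propositions~\ref{prop:[f1N1,f2N2]FN} and~\ref{prop:$[N^k,N^l]FN}, which avoids the recursive bookkeeping you anticipate; and in (ii) no condition on the 1-form $\gamma$ is needed at all---only $[R,M_k]=0$ matters---so your worry about the covectors having to ``factor through powers of $N^\ast$'' is superfluous.
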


This follows as a consequence of the next two results, the first giving the crucial insight.

\begin{lemma}
Let $M$ and $N$ be two commuting tensor fields of type $(1,1)$. If the Nijenhuis torsion of $M$ has the form
\be
       T(M)(X,Y) = \sum_{i=0}^m \Big( f_i(x,N,Y) \, N^i X - f_i(x,N,X) \, N^i Y \Big) \, ,    \label{T_with_H=0}
\ee
with some non-negative integer $m$ and functions $f_i$, then the Haantjes tensor of $M$ vanishes.
\end{lemma}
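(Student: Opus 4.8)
The plan is to compute the Haantjes tensor $H(M)$ directly by substituting the assumed structural form (\ref{T_with_H=0}) for $T(M)$ into the defining expression
\bez
  H(M)(X,Y) = M^2 T(M)(X,Y) + T(M)(MX,MY) - M\big( T(M)(MX,Y) + T(M)(X,MY) \big) \, ,
\eez
and to show that everything cancels. The key leverage is that $M$ and $N$ commute, so that $M$ commutes with every power $N^i$; this means the operator $M$ can be freely moved past each $N^i X$ or $N^i Y$ that appears as the ``vector part'' of $T(M)$. The scalar coefficients $f_i(x,N,X)$ are $C^\infty(\cM)$-valued and linear (homogeneous of degree one) in their last vector-field argument, so applying $M$ to a term like $f_i(x,N,Y)\,N^i X$ yields $f_i(x,N,Y)\,N^i M X$, leaving the coefficient untouched.

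First I would expand each of the four terms. In $M^2 T(M)(X,Y)$, pulling $M^2$ inside and using $M^2 N^i = N^i M^2$ gives $\sum_i \big( f_i(x,N,Y)\,N^i M^2 X - f_i(x,N,X)\,N^i M^2 Y \big)$. In the mixed term $T(M)(MX,MY)$, the replacement $X\mapsto MX$, $Y\mapsto MY$ produces $\sum_i \big( f_i(x,N,MY)\,N^i M X - f_i(x,N,MX)\,N^i M Y \big)$. In $T(M)(MX,Y)$ I get $\sum_i \big( f_i(x,N,Y)\,N^i M X - f_i(x,N,MX)\,N^i Y \big)$, and applying $M$ and commuting it through $N^i$ turns this into $\sum_i \big( f_i(x,N,Y)\,N^i M^2 X - f_i(x,N,MX)\,N^i M Y \big)$; the term $T(M)(X,MY)$ is handled symmetrically, giving $\sum_i \big( f_i(x,N,MY)\,N^i M X - f_i(x,N,X)\,N^i M^2 Y \big)$. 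The point is that the two arguments of each $f_i$ — the fixed internal ``$N$'' and the vector argument — transform in exactly the way needed for the four contributions to match up in pairs.

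Collecting the coefficient of each independent tensor structure $N^i M^2 X$, $N^i M^2 Y$, $N^i MX$, $N^i MY$, I expect a clean pairwise cancellation: the $f_i(x,N,Y)\,N^i M^2 X$ piece from $M^2 T(M)$ is cancelled by the matching piece coming from $-M\,T(M)(MX,Y)$, the $f_i(x,N,MY)\,N^i M X$ piece from $T(M)(MX,MY)$ is cancelled by the piece from $-M\,T(M)(X,MY)$, and similarly for the $X\leftrightarrow Y$ antisymmetric partners. Thus each of the four structural families of terms appears twice with opposite signs and the whole sum is zero.

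The main obstacle, and the only step requiring genuine care rather than bookkeeping, is justifying that the scalar coefficients really do pass through $M$ unchanged — that is, confirming the precise meaning of the notation $f_i(x,N,X)$: it must be $C^\infty(\cM)$-linear in $X$ so that it is \emph{not} affected when $M$ acts on the accompanying vector $N^i X$, and its dependence on $x$ and on the operator $N$ must be unaltered by the substitutions $X\mapsto MX$, $Y\mapsto MY$ except precisely in the explicit vector slot. Once this is pinned down, the commutativity $[M,N]=0$ (hence $[M,N^i]=0$) does all the remaining work, and the cancellation is purely algebraic; I would organize the computation by listing the four monomial types and verifying the sign pattern, rather than writing every term out in full.
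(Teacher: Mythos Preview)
Your proposal is correct and is precisely the direct verification the paper has in mind; the paper's own proof is simply the one line ``This is easily verified directly.'' One small remark: your worry about $C^\infty(\cM)$-linearity of $f_i$ in its vector argument is unnecessary --- all that is used is that each $f_i(x,N,Y)$ is a \emph{scalar}, so $M$ passes through it and acts only on the accompanying $N^i X$, while the substitutions $X\mapsto MX$, $Y\mapsto MY$ in $T(M)(\cdot,\cdot)$ are purely formal; the four-term cancellation you describe then goes through exactly as written.
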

\begin{proof}
This is easily verified directly.
\end{proof}

\begin{proposition}
$T(M_k)$ has the form (\ref{T_with_H=0}).
\end{proposition}
\begin{proof}
The recursion relation for the $M_k$ implies that
\bez
         M_{k+1} = N^{k+1} - \sum_{i=0}^k a_{k-i} N^i \, .
\eez 
Then 
\bez
       T(M_k) &=& \frac{1}{2} \, [ M_k , M_k ]_{\mathrm{FN}} \\
     &=& \frac{1}{2} \, [ N^k , N^k ]_{\mathrm{FN}} - \sum_{i=0}^{k-1} [ N^k , a_{k-1-i} N^i  ]_{\mathrm{FN}} 
             + \frac{1}{2} \, \sum_{i=0}^{k-1} \sum_{j=0}^{k-1} [ a_{k-1-i} N^i  , a_{k-1-j} N^j  ]_{\mathrm{FN}} \, .
\eez
Since $T(N)=0$, our assertion now follows from Propositions~\ref{prop:[f1N1,f2N2]FN} and ~\ref{prop:$[N^k,N^l]FN}.
\end{proof}

\section{Conclusions}
\label{sec:conclusions}

The Lorenzoni-Magri hierarchy is an example that shows that Nijenhuis geometry, 
dealing with $(1,1)$ tensor fields with vanishing Nijenhuis torsion (also see \cite{BKM22,BKM23}), is a too narrow 
framework  for (integrable) hydrodynamic-type systems. Rather, it should be 
widened at least to Haantjes geometry, where the weaker condition of vanishing Haantjes tensor is imposed. This 
is not a new insight. It has led to the notion of Haantjes manifolds \cite{Magr18} and other developments,
see, in particular, \cite{Temp+Tond22a,Temp+Tond22b,RTT23}. 
In \cite{Bogo06b} it was shown that vanishing of the Haantjes tensor is a necessary condition for the existence of a 
bi-Hamiltonian structure of a hydrodynamic-type system. 
\vspace{.5cm}

\noindent
\textbf{Important note added.} An excellent referee pointed out that Theorem~4.5, considered to be a main 
result of this work, follows directly from Corollary 3.3 in O. I. Bogoyavlenskij, General algebraic identities for 
the Nijenhuis and Haantjes tensors, Izvestiya: Mathematics, 2004, Volume 68, 1129--1141.

\small

\normalsize

\end{document}